\documentclass[pra,prxquantum,twocolumn,english,superscriptaddress,floatfix]{revtex4-2}

\usepackage[english]{babel}
\usepackage{textcomp,xcolor,natbib}
\usepackage{dcolumn}
\usepackage{graphicx}
\graphicspath{{Figures/}}

\usepackage{amsmath, bbm}
\usepackage{latexsym}
\usepackage{amsfonts}   
\usepackage{amssymb}
\usepackage{array}      
\usepackage{epsfig}
\usepackage{txfonts}
\usepackage{xcolor,braket}
\usepackage[colorlinks=true,linkcolor=blue,urlcolor=blue,citecolor=blue]{hyperref}
\usepackage[normalem]{ulem}
\usepackage{soul}
\usepackage{dsfont}

\usepackage{xfrac}  
\usepackage{relsize}

\usepackage{amsthm}
\usepackage{mathrsfs, mathtools, amsmath}

\usepackage[normalem]{ulem}
\usepackage{cancel}
\usepackage{enumitem}

\usepackage{xspace}
\usepackage{orcidlink}

\newcommand{\ketbra}[1]{{\ket{#1}\bra{#1}}}
\newcommand{\id}{{\mathbbm{1}}}
\newcommand{\abs}[1]{\left\lvert #1 \right\rvert}
\newcommand{\norm}[1]{\left\lVert #1 \right\rVert}
\newcommand{\tr}{{\operatorname{tr}}}

\newcommand{\shro}{Schr\"odinger\xspace}
\newcommand{\schro}{\shro}
\newcommand{\heis}{Heisenberg\xspace}

\newtheorem{example}{Example}
\newtheorem{proposition}{Proposition}

\definecolor{lavender}{rgb}{0.75, 0.58, 0.89}

\begin{document}
\title{Schr\"odinger and Heisenberg non-Markovianity in quantum information tasks}

\author{Federico Settimo\, \orcidlink{0000-0002-0123-6950}}
\email{fesett@utu.fi}
\affiliation{Department of Physics and Astronomy,
University of Turku, FI-20014 Turun yliopisto, Finland}

\author{Kimmo Luoma\, \orcidlink{0000-0003-3118-612X}}
\affiliation{Department of Physics and Astronomy,
University of Turku, FI-20014 Turun yliopisto, Finland}

\author{Jyrki Piilo\, \orcidlink{0000-0002-5595-873X}}
\affiliation{Department of Physics and Astronomy,
University of Turku, FI-20014 Turun yliopisto, Finland}

\author{Bassano Vacchini\, \orcidlink{0000-0002-7574-9951}}
\affiliation{Dipartimento di Fisica ``Aldo Pontremoli'', Universit{\`a} degli Studi di Milano, Via Celoria 16, I-20133 Milan, Italy}
\affiliation{Istituto Nazionale di Fisica Nucleare, Sezione di Milano, Via Celoria 16, I-20133 Milan, Italy}

\author{Dariusz Chru\'sci\'nski \orcidlink{0000-0002-6582-6730}}
\affiliation{Institute of Physics, Faculty of Physics, Astronomy and Informatics,
Nicolaus Copernicus University, Grudziadzka 5/7, 87-100 Toru\'{n},
Poland}

\author{Andrea Smirne\, \orcidlink{0000-0003-4698-9304}}
\affiliation{Dipartimento di Fisica ``Aldo Pontremoli'', Universit{\`a} degli Studi di Milano, Via Celoria 16, I-20133 Milan, Italy}
\affiliation{Istituto Nazionale di Fisica Nucleare, Sezione di Milano, Via Celoria 16, I-20133 Milan, Italy}

\begin{abstract}
    {Quantum non-Markovianity has been widely studied and connected to the existence of memory effects in the dynamics of open systems.
    Surprisingly, working in the Schr\"odinger or in the Heisenberg picture can provide inequivalent description non-Markovianity: a process can appear to be memoryless in one picture, while displaying memory effects in the other.
    Here, we investigate which kind of memory is relevant for different quantum information tasks.
    Some of them, such as sending information via a noisy channel, require memory in both pictures in order to exhibit revivals in the task performance.
    For others, only one type of memory is sufficient.
    We also provide necessary conditions for non-Markovianity in both pictures by only considering the dynamics in one picture, showing for instance
    that the previously considered witness of Schr\"odinger non-Markovianity in terms of the volume of accessible states does indeed witness non-Markovianity in both pictures at the same time.}
\end{abstract}

\maketitle

\section{Introduction}
\label{sec:intro}

Realistic quantum systems inevitably interact with external degrees of freedom, and are therefore to be described as open systems \cite{Breuer-Petruccione, Vacchini-OQS}.
Due to this interaction, information initially encoded in the open system is generally lost to the environment \cite{Buscemi2016}.
However, in generic dynamical regimes, part of this information can flow back into the open system at later times.
When this occurs, the dynamics is said to be non-Markovian and can be associated with the presence of memory effects in the open-system evolution \cite{BLPV-colloquium}.

Several inequivalent definitions of quantum non-Markovianity have been introduced in the literature.
Some of them are based on the non-monotonic behavior of suitable distinguishability quantifiers between states \cite{BLP, BLP-PRA, Wissmann2015, Megier2021, Smirne2022, Settimo-JSD} or of entanglement \cite{RHP}.
Other approaches characterize non-Markovianity in terms of operational versus non-operational criteria \cite{Budini2022}, causal and non-causal revivals \cite{Buscemi2025}, or within the process tensor formalism \cite{Pollock2018, Pollock2018a, Giarmatzi2021, Santos2025}.
Here, we focus on non-Markovianity defined through divisibility properties of the dynamical map \cite{BLP, BLP-PRA, RHP, Chruscinski2011, Chruscinski2022}.

Surprisingly, describing the dynamics of the open system in the \schro or in the \heis picture does not generally lead to the same notion of non-Markovianity \cite{Settimo-SchroHeis}.
For some dynamics, memory effects can be detected only in one picture, while the evolution remains Markovian in the other.
In other cases, instead, both pictures reveal the existence of memory effects.
In each picture, non-Markovianity can be connected to revivals in time of different physical properties or operational tasks.
For instance, \schro non-Markovianity allows for non-monotonicity of entanglement \cite{RHP} or of the guessing probability between states \cite{BLP, BLP-PRA}.
\heis non-Markovianity, on the other hand, allows for revivals of incompatibility \cite{Settimo-SchroHeis}.

In this work, we investigate several additional quantum-information tasks and analyze which type of memory is relevant for each of them.
We show that some tasks require memory effects only in one picture, whereas others require non-Markovianity in both pictures.
In particular, revivals in time of the channel capacity require non-Markovianity both in the \schro and in the \heis picture.
Channel distinguishability, instead, depends on the way the noisy dynamics is used: postprocessing with the noisy channel is sensitive to \schro non-Markovianity \cite{Bae2016}, while preprocessing is sensitive to Heisenberg non-Markovianity.
In a companion paper \cite{revivals_nonlocality}, we study the role of non-Markovianity for revivals in time of Bell nonlocality and device-independent quantum key distribution.

The rest of the paper is organized as follows.
In Sec.~\ref{sec:divisibility}, we introduce the concept of divisibility, both in the \schro and in the \heis picture.
In Sec.~\ref{sec:divisibility_both}, we derive general conditions under which revivals in operational tasks require non-Markovianity in both pictures, we characterize classes of dynamics for which non-Markovianity in one picture necessarily implies non-Markovianity in the other, introduce a channel-norm witness capable of detecting non-Markovianity in both pictures, and show that revivals of the volume of accessible states provide another witness of this kind.
In Sec.~\ref{sec:tasks}, we connect these results to revivals in time in the performance of different quantum information tasks or protocols.
In Sec.~\ref{sec:classical}, we extend the discussion to classical stochastic dynamics, establishing a rigorous analogy between \schro and Heisenberg divisibility in the quantum and classical settings.
Lastly, in Sec.~\ref{sec:conclusion}, we provide the conclusions of our work.

\section{Schr\"odinger versus Heisenberg divisibility}
\label{sec:divisibility}
Under the assumption that the system and the environment are initially uncorrelated, with a fixed initial environmental state, the dynamics of the open system alone is described in the \schro picture by a completely positive and trace preserving (CPTP) map $\Phi_t$ such that $\rho(t) = \Phi_t[\rho(0)]$ \cite{Breuer-Petruccione, Vacchini-OQS}.
Alternatively, one can use the \heis picture, in which the states are kept fixed and the operators evolve as $X(t) = \Phi^*_t[X(0)]$, where $\Phi^*_t$ is the dual of $\Phi_t$, defined according to
\begin{equation}
    \tr\big[\Phi_t[\rho]\,X\big] = \tr\big[\rho\,\Phi^*_t[X]\big]\qquad\forall\rho,X,
\end{equation}
and it is a completely positive and unital (CPU) map.

The dynamical map $\Phi_t$ is the solution of the master equation $d\rho/dt = \mathcal L_t[\rho]$, with $\mathcal L_t = \dot\Phi_t\circ\Phi_t^{-1}$ (in the following we assume that $\Phi_t$ is differentiable and invertible) that can be written in the form \cite{Gorini1976, Lindblad1976}
\begin{equation}
    \label{eq:ME_Schro}
    \mathcal L_t[\rho] = -i[H,\rho] + \sum_\alpha\gamma_\alpha\left[L_\alpha\rho L_\alpha^\dagger-\frac12\left\{L_\alpha^\dagger L_\alpha, \rho\right\}\right],
\end{equation}
with the coefficients $\gamma_\alpha$, Lindblad operators $L_\alpha$ and Hamiltonian $H=H^\dagger$ that can depend on time,
and $\gamma_\alpha$ can take on negative values.
Alternatively, one can derive the master equation for the \heis picture dynamics $dX/dt = \mathcal R_t^*[X]$, with $\mathcal R_t^* = \dot\Phi^*_t\circ(\Phi^*_t)^{-1}$, reading
\begin{equation}
    \label{eq:ME_Heis}
    \mathcal R_t^*[\rho] = -i[K,X] + \sum_\alpha\xi_\alpha\left[R_\alpha^\dagger X R_\alpha-\frac12\left\{R_\alpha^\dagger R_\alpha, X\right\}\right],
\end{equation}
where the (possibly negative) coefficients $\xi_\alpha$ and the operators $R_\alpha$ and $K=K^\dagger$ can indeed depend on time.
Notice that in general $\mathcal R_t\ne\mathcal L_t$, unless $[\Phi_s, \Phi_t]=0$ for all times $s,t$ \cite{Settimo-SchroHeis}.

\subsection{Definitions}
\label{subsec:divisibility_def}
The invertibility of the map $\Phi_t$ further allows one to define the two-parameter propagators that describe the evolution forward in time from time $s$ to a later time $t\geq s$.
In the \schro picture, the propagator is such that \cite{Chruscinski2022}
\begin{equation}
    \label{eq:prop_Schro}
    \Phi_t = \Phi^S_{t,s}\circ\Phi_s,\qquad \Phi_{t,s}^S = \Phi_t\circ\Phi_s^{-1},
\end{equation}
so that $\rho(t) = \Phi_{t,s}^S[\rho(s)]$.
In the \heis picture the propagator reads \cite{Settimo-SchroHeis}
\begin{equation}
    \label{eq:prop_Heis}
    \Phi_t^* = {\Phi^H_{t,s}}^*\circ\Phi_s^*,\qquad {\Phi_{t,s}^H}^* = \Phi_t^*\circ(\Phi_s^*)^{-1}
\end{equation}
and it is such that $X(t) = {\Phi_{t,s}^H}^*[X(s)]$.
Notice that it is possible to decompose the \schro picture dynamical map $\Phi_t$ in two ways
\begin{equation}
    \Phi_t = \Phi_{t,s}^S\circ\Phi_s = \Phi_s\circ\Phi_{t,s}^H,
\end{equation}
but only the first propagator $\Phi_{t,s}^S$ describes the intermediate evolution for states, while only the second propagator $\Phi_{t,s}^H$ describes the intermediate evolution for operators in the Heisenberg picture.

Unlike the dynamical map $\Phi_t$, the propagators do not need to be CP, and violations of CP have been connected to memory effects in the dynamics \cite{BLP, BLP-PRA, RHP, BLPV-colloquium, rivas-quantum-nm}.
In the following, we call the dynamics {\it \schro CP divisible} (SCPD) if $\Phi_{t,s}^S$ is CP or {\it \schro non CP divisible} (SnCPD) otherwise.
Similarly, we call it {\it \heis CP divisible} (HCPD) if $\Phi_{t,s}^H$ is CP or {\it \heis non CP divisible} (HnCPD) otherwise.
If both $\Phi_{t,s}^S$ and $\Phi_{t,s}^H$ are CP, we call the dynamics SHCPD, while if none of the two is CP we call it SHnCPD.
Analogous definitions and acronyms without C will refer to the corresponding divisibility properties where positivity is considered instead of complete positivity.

Violations of P divisibility in both pictures can be linked to revivals in time of suitable norms.
In particular, SPD is equivalent to \cite{Kossakowski-necessary, Wimann2012}
\begin{equation}
    \label{eq:contractivity_TD}
    \norm{\Phi_2[X]}_1\le\norm{\Phi_1[X]}_1\qquad\forall X,
\end{equation}
where $\norm X_1 = \tr\abs X/2$ is the trace norm.
This, in turn, is equivalent to a monotonic decrease in time of the guessing probability between states.
Suppose that one of two states $\rho$ or $\sigma$ is prepared, then the probability of correctly guessing which state was chosen is given by \cite{Fuchs-cryptographic, heinosaari-ziman}
\begin{equation}
    \label{eq:guess_states}
    P_{\mathrm{guess}}^{\mathrm{s}}(\rho,\sigma)  = \frac12\big(1+D_1(\rho,\sigma)\big),
\end{equation}
where $D_1$ is the trace distance (TD)
\begin{equation}
    \label{eq:TD}
    D_1(\rho,\sigma) = \frac12\norm{\rho-\sigma}_1 = \max_{E\in\mathcal E(\mathscr H)}\tr\left[E\,(\rho-\sigma)\right]
\end{equation}
and $\mathcal E(\mathscr H)$ is the set of effects $0\le E\le\id$ on the Hilbert space $\mathscr H$.

Furthermore, if the dynamics is SCPD and acts locally, it cannot increase the entanglement of any state \cite{Bennett1996, Horodecki2009}
\begin{equation}
    \label{eq:monotonicity_ent}
    E\big((\Phi_t\otimes\operatorname{id})[\rho]\big) \le E\big((\Phi_s\otimes\operatorname{id})[\rho]\big) 
\end{equation}
for any times $t\ge s$ and any state $\rho$; here, $E$ is any entanglement monotone \cite{Horodecki2009}.
Therefore, revivals in time of  entanglement have been connected to memory effects being present in the dynamics \cite{RHP}.

HPD, instead, is equivalent to \cite{Paulsen2003, Settimo-SchroHeis}
\begin{equation}
    \label{eq:contractivity_OD}
    \norm{\Phi_2^*[X]}_\infty\le\norm{\Phi_1^*[X]}_\infty\qquad \forall X,
\end{equation}
where $\norm X_\infty = \max_\phi\abs{\braket{\phi\vert X\vert\phi}}$ is the operator norm.
Analogously, it is equivalent to a monotonic behavior of the guessing probability of which of two effects is prepared
\begin{equation}
    \label{eq:guess_effects}
    P_{\mathrm{guess}}^{\mathrm{e}}(E,F) = \frac12\big(1+D_\infty(E,F)\big),
\end{equation}
where $D_\infty$ is the operator distance (OD)
\begin{equation}
    \label{eq:OD}
    D_\infty(E,F) = \norm{E-F}_\infty= \max_{\rho\in\mathcal S(\mathscr H)}\tr\left[(E-F)\,\rho\right]
\end{equation}
and $\mathcal S(\mathscr H)$ is the set of states on the Hilbert space $\mathscr H$.

Furthermore, HCPD implies monotonicity of incompatibility
\begin{equation}
    \label{eq:monotonicity_incop}
    I\big(\Phi_t^*[E],\Phi_t^*[F]\big) \le I\big(\Phi_s^*[E],\Phi_s^*[F]\big)
\end{equation}
for any $t\ge s$ and pairs of effects $E,F\in\mathcal E(\mathscr H)$.
Here, $I$ is an arbitrary incompatibility monotone \cite{Heinosaari2015-ibc, Heinosaari2015-im}, i.e. any functional such that $I(E,F)=0$ if and only if $E$ and $F$ are compatible and contractive under CPU maps.

Additionally, the divisibility properties of the dynamics can be inferred directly from the master equations \eqref{eq:ME_Schro} and \eqref{eq:ME_Heis}:
SCPD is equivalent to positivity of all coefficients $\gamma_\alpha$, while SPD corresponds to the weaker condition \cite{Kossakowski-necessary}
\begin{equation}
    \sum_\alpha\gamma_\alpha\,\abs{\braket{\phi_i\vert L_\alpha\vert\phi_j}}^2\ge0
\end{equation}
for any orthonormal basis $\{\phi_i\}_i$ and $i\ne j$.
Analogous conditions hold also for H(C)PD, upon exchanging the coefficients $\gamma_\alpha$ with $\xi_\alpha$ and the Lindblad operators $L_\alpha$ with $R_\alpha$.

\subsection{Cases}
\label{subsec:divisibility_cases}
For a generic dynamics, divisibility can hold only in one picture, in both pictures or in none.
Semigroups of CP dynamical maps are indeed SHCPD processes, since in this case $\mathcal L$ does not depend on time,
so that $\mathcal R=\mathcal L$ and hence $\Phi_{t,s}^S = \Phi_{t,s}^H = e^{\mathcal L(t-s)}$, which is CP.


There exist also dynamics that are non divisible either only in one picture or in both \cite{Carollo2026}.
In particular, Ref. \cite{Settimo-SchroHeis} provides an explicit example of dynamics that are not CP-divisible in one picture, while the propagator in the other picture describes unitary dynamics.
In App.~\ref{app:nm_unitary}, we extend this result to arbitrary dynamics that are not isotropic contractions.

On the other hand, there exist classes of dynamics for which the lack of complete positivity of $\Phi^S_{t,s}$ necessarily implies that of $\Phi^H_{t,s}$, independently of the choice of compatible dynamics $\Phi_s$ up to time s; the converse implication from non-CP $\Phi^H_{t,s}$ to non-CP $\Phi^S_{t,s}$ also holds.
A proof of this result is provided in App.~\ref{app:nM_both}, where uniform expansions of the Bloch sphere are considered as intermediate dynamics.


\section{Schr\"odinger and Heisenberg divisibility}
\label{sec:divisibility_both}
In the previous section, we have recalled some {quantities}, such as entanglement or incompatibility, whose evolution is monotonic under either SCPD or HCPD dynamics, independently of what happens in the other picture.
In this section, we provide a general characterization of quantities that, instead, require both SnCPD {\it and} HnCPD in order to have revivals.
After that, we provide necessary conditions and witnesses for the detection of SHnCPD.

\subsection{Necessity of SHCPD}
\label{subsec:necessity_both}
The conditions for monotonicity of entanglement and incompatibility in Eqs.~\eqref{eq:monotonicity_ent} and \eqref{eq:monotonicity_incop}
are violated -- thus witnessing, respectively, SnCPD and HnCPD -- whenever there exists one single state or pair of effects leading to a revival.
Tasks that require SHnCPD will instead be associated with properties that hold for all states and effects.
We formalize this in the two following Propositions, which generalize the results of \cite{revivals_nonlocality}.

The first proposition concerns the impact of the dynamics on properties of the quantum states.
\begin{proposition}
    \label{prop:cond_full_nM}
    Consider an evolution $\Phi_t=\Phi_{t,s}^S\Phi_s$.
    Suppose that there exists a subset $C\subseteq\mathcal S(\mathscr H\otimes \mathbb C^k)$ such that
    \begin{enumerate}
        \item $(\Phi_s\otimes\operatorname{id}_k)[\rho]\in C$ for all states $\rho\in\mathcal S(\mathscr H\otimes \mathbb C^k)$,
        \item any local CPTP map $\Lambda$ is such that $(\Lambda\otimes\operatorname{id}_k)[C]\subseteq C$.
    \end{enumerate}
    If there exists a state $\bar{\rho}\in\mathcal S(\mathscr H\otimes \mathbb C^k)$ such that $(\Phi_t\otimes\operatorname{id}_k)[\bar{\rho}]\not\in C$, then the dynamics must be SHnCPD.

    The same holds if a subset $\bar C\subseteq \mathcal B(\mathscr H\otimes\mathbb C^k)$ is preserved by CPU maps and contains all operators at time $s$.
\end{proposition}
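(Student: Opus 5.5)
We argue by contraposition in each picture separately: we show that if the dynamics were SCPD, or if it were HCPD, then $(\Phi_t\otimes\operatorname{id}_k)[\rho]\in C$ for every $\rho$. Hence a single $\bar\rho$ leaving $C$ rules out both divisibility properties, i.e.\ the dynamics is SHnCPD.

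\emph{Schr\"odinger side.} Suppose $\Phi^S_{t,s}$ is CP. Since $\Phi_t$ and $\Phi_s$ are trace preserving, $\Phi^S_{t,s}=\Phi_t\circ\Phi_s^{-1}$ is trace preserving on the range of $\Phi_s$. Invertibility makes that range everything, so $\Phi^S_{t,s}$ is CPTP. Then
\begin{equation}
    (\Phi_t\otimes\operatorname{id}_k)[\bar\rho]=(\Phi^S_{t,s}\otimes\operatorname{id}_k)\big[(\Phi_s\otimes\operatorname{id}_k)[\bar\rho]\big].
\end{equation}
The inner state lies in $C$ by condition 1, and condition 2 with $\Lambda=\Phi^S_{t,s}$ keeps it in $C$. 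This contradicts the hypothesis, so the dynamics is SnCPD.

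\emph{Heisenberg side.} Suppose $\Phi^H_{t,s}$ is CP. Since $\Phi_t=\Phi_s\circ\Phi^H_{t,s}$, the propagator is also trace preserving: its dual $\Phi^{H*}_{t,s}=\Phi_t^*\circ(\Phi_s^*)^{-1}$ is unital because $\Phi_t^*$ and $\Phi_s^*$ are unital. So $\Phi^H_{t,s}$ is CPTP. Now write
\begin{equation}
    (\Phi_t\otimes\operatorname{id}_k)[\bar\rho]=(\Phi_s\otimes\operatorname{id}_k)\big[(\Phi^H_{t,s}\otimes\operatorname{id}_k)[\bar\rho]\big].
\end{equation}
Here $\sigma=(\Phi^H_{t,s}\otimes\operatorname{id}_k)[\bar\rho]$ is a legitimate state, because CP maps tensored with the identity map positive operators to positive ones and preserve trace. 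Condition 1 then gives $(\Phi_s\otimes\operatorname{id}_k)[\sigma]\in C$, again a contradiction, so the dynamics is HnCPD. Note that condition 1 is stated for all states precisely so that it applies to this preprocessed $\sigma$, while condition 2 handles postprocessing in the Schr\"odinger case.

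\emph{Operator version.} For $\bar C\subseteq\mathcal B(\mathscr H\otimes\mathbb C^k)$, repeat the argument with dual maps. The factorizations are $\Phi_t^*=\Phi^{H*}_{t,s}\circ\Phi_s^*$ and $\Phi_t^*=\Phi_s^*\circ\Phi^{S*}_{t,s}$.
\begin{itemize}
    \item If HCPD holds, $\Phi^{H*}_{t,s}$ is CPU. It maps $(\Phi_s^*\otimes\operatorname{id})[X]\in\bar C$ into $\bar C$ by closure of $\bar C$ under CPU maps.
    \item If SCPD holds, $\Phi^{S*}_{t,s}$ is CPU, by the dual of the trace-preservation argument above. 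Then $(\Phi^{S*}_{t,s}\otimes\operatorname{id})[X]$ is an admissible operator, and $\Phi_s^*\otimes\operatorname{id}$ sends it into $\bar C$ because $\bar C$ contains all operators at time $s$.
\end{itemize}

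The only delicate point is confirming that each propagator is trace preserving (dually, unital) and that its output is again an admissible input. This rests on invertibility and on trace preservation or unitality of the maps, and needs care only if one insists that the operator domain be restricted to, for example, effects. In that case one checks that CPU maps send effects to effects, which holds since they preserve $0\le X\le\id$.
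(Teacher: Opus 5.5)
Your proof is correct and follows the paper's argument. SCPD is ruled out by closure of $C$ under the CPTP postprocessing $\Phi^S_{t,s}$, HCPD by applying condition 1 to the preprocessed state $(\Phi^H_{t,s}\otimes\operatorname{id}_k)[\bar\rho]$, with the roles exchanged in the dual (operator) version. Your explicit checks that the propagators are trace preserving (resp.\ unital), and your conclusion that \emph{both} propagators fail to be CP, are slightly more careful than the paper's write-up, which takes trace preservation for granted and phrases the conclusion loosely as ``or both''.
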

\begin{proof}
    Let $k=1$, but the proof is analogous for $k> 1$.
    If $\Phi_{t,s}^S$ is CPTP then
    \begin{equation}
        \Phi_t[\rho] = \Phi_{t,s}^S\big[\Phi_s[\rho]\big]\in C\qquad\forall\rho
    \end{equation}
    because of 2. on $\Phi_{t,s}^S$.
    If, instead, $\Phi_{t,s}^H$ is CPTP, then
    \begin{equation}
        \Phi_t[\rho] = \Phi_s\big[\Phi_{t,s}^H[\rho]\big]=\Phi_s[\rho^\prime]\in C\qquad\forall\rho
    \end{equation}
    since $\rho^\prime=\Phi_{t,s}^H[\rho]\in\mathcal S(\mathscr H)$ and $\Phi_s$ maps all states to $C$ because of 1.
    Thus, the existence of a state $\bar{\rho}$ such that $\Phi_t[\bar\rho]\not\in C$ implies that $\Phi^S_{t_s}$ or $\Phi^H_{t_s}$ or both are not CPTP.
    The dual is analogous.
\end{proof}
The second proposition, instead, regards the dynamics of monotone functionals under CPTP or CPU maps, when maximizing such functionals over all ensembles of states or over all measurement settings.
\begin{proposition}
    \label{prop:cond_full_nM_2}
    Let $M$ be a functional from the space of ensembles of states to $\mathbb R$ and assume it is monotonic under CPTP maps, $M(\Lambda(\mathcal E))\le M(\mathcal E)$ $\forall\mathcal E$ ensemble, $\forall\Lambda$ CPTP.
    {Then, if}
    \begin{equation}
        \label{eq:contractivity_max_monotone}
        \max_\mathcal E M(\Phi_t(\mathcal E)) > \max_\mathcal E M(\Phi_s(\mathcal E)),
    \end{equation}
    {for $t>s$,} the dynamics must be SHnCPD.

    The same holds for $M$ monotonic under CPU maps and considering effects instead of ensembles.
\end{proposition}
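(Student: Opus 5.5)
We argue by contraposition: assuming that at least one of $\Phi^S_{t,s}$ or $\Phi^H_{t,s}$ is CPTP, we show that the opposite inequality
\begin{equation*}
\max_{\mathcal E} M(\Phi_t(\mathcal E)) \le \max_{\mathcal E} M(\Phi_s(\mathcal E))
\end{equation*}
holds. Throughout, $\Lambda(\mathcal E)$ denotes the ensemble obtained by applying $\Lambda$ to every state of $\mathcal E=\{p_i,\rho_i\}$ with the same probabilities. A CPTP map therefore sends ensembles to ensembles. The two cases mirror the proof of Proposition~\ref{prop:cond_full_nM}: one uses monotonicity of $M$, the other uses the fact that the maximization runs over all ensembles.

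\emph{Case 1: $\Phi^S_{t,s}$ is CPTP.} Fix any ensemble $\mathcal E$. Then $\Phi_t(\mathcal E)=\Phi^S_{t,s}(\Phi_s(\mathcal E))$, and monotonicity of $M$ under the CPTP map $\Phi^S_{t,s}$ gives $M(\Phi_t(\mathcal E))\le M(\Phi_s(\mathcal E))\le\max_{\mathcal E'}M(\Phi_s(\mathcal E'))$. Taking the maximum over $\mathcal E$ on the left yields the claim.

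\emph{Case 2: $\Phi^H_{t,s}$ is CPTP.} Here we use $\Phi_t=\Phi_s\circ\Phi^H_{t,s}$. For any ensemble $\mathcal E$, the set $\mathcal E':=\Phi^H_{t,s}(\mathcal E)$ is again a legitimate ensemble, because $\Phi^H_{t,s}$ maps states to states and leaves the weights unchanged. Hence $M(\Phi_t(\mathcal E))=M(\Phi_s(\mathcal E'))\le\max_{\mathcal E''}M(\Phi_s(\mathcal E''))$, and maximizing over $\mathcal E$ again gives the claim. This case does not need monotonicity of $M$ at all; it only needs the maximization domain to be closed under the intermediate map. If the paper allows ensembles on an extended space $\mathscr H\otimes\mathbb C^k$, the same two steps apply verbatim with $\Phi\otimes\operatorname{id}_k$, using complete positivity.

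The dual statement follows by the same two cases in the Heisenberg picture, where $M$ is a functional of effects (or tuples of effects/measurements) that is monotone under CPU maps. If ${\Phi^H_{t,s}}^*$ is CPU, then $\Phi_t^*={\Phi^H_{t,s}}^*\circ\Phi_s^*$, and monotonicity gives the bound. If $\Phi^S_{t,s}$ is CPTP, then ${\Phi^S_{t,s}}^*$ is CPU and $\Phi_t^*=\Phi_s^*\circ{\Phi^S_{t,s}}^*$. Since CPU maps send effects to effects and POVMs to POVMs, the image is an admissible argument of the maximization, and the bound follows. Either way, the strict inequality \eqref{eq:contractivity_max_monotone} forces both propagators to be non-CP, i.e.\ SHnCPD.

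The only delicate point is making sure the maxima exist and that the admissible set is genuinely invariant under the propagator. If attainment of the maximum is doubtful, the argument works equally with suprema. If the ensemble class is restricted (e.g.\ fixed cardinality or fixed priors), one must check that the maps preserve that class. They do, because the maps act statewise and keep the labels and probabilities.
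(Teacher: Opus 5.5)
Your proof is correct and follows the paper's argument essentially verbatim: the SCPD case uses monotonicity of $M$ under $\Phi^S_{t,s}$, while the HCPD case uses only that $\Phi^H_{t,s}$ maps ensembles to ensembles, so the maximization domain shrinks. Your explicit treatment of the dual case (writing $\Phi_t^*=\Phi_s^*\circ{\Phi^S_{t,s}}^*$), along with your remarks on suprema and restricted ensemble classes, supplies details the paper leaves implicit.
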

\begin{proof}
    Suppose that the dynamics is SCPD, then 
    \begin{equation}
        M(\Phi_t(\mathcal E))= M(\Phi_{t,s}^S(\Phi_s(\mathcal E))) \le  M(\Phi_s(\mathcal E))
    \end{equation}
    by contractivity of $M$ and therefore Eq.~\eqref{eq:contractivity_max_monotone} cannot hold.

    If the dynamics is HCPD, then
    \begin{equation}
        \begin{split}
            \max_\mathcal E M(\Phi_t(\mathcal E)) &= \max_\mathcal E M(\Phi_s(\Phi_{t,s}^H(\mathcal E)))\\
            &= \max_{\mathcal E^\prime = \Phi_{t,s}^H(\mathcal E)}M(\Phi_s(\mathcal E^\prime))\\            
            &\le \max_\mathcal E M(\Phi_s(\mathcal E))
        \end{split}
    \end{equation}
    where the inequality holds because $\mathcal E^\prime = \Phi_{t,s}^H(\mathcal E)$ is also an ensemble and thus the space on which maximization is performed is reduced.

    Thus, Eq.~(\ref{eq:contractivity_max_monotone}) can hold only if the dynamics is neither SCPD, nor HCPD, i.e. if it is SHnCPD.
    The proof for functionals over effects goes along the same lines.  
\end{proof}

We now provide two examples to clarify the relevance of the two Propositions.
\begin{example}[Entanglement breaking channel]
    \label{ex:EBC}
    A channel $\Phi$ is said to be entanglement breaking (EB) if $(\Phi\otimes\operatorname{id})[\rho]$ is not entangled for every bipartite state $\rho$ \cite{Horodecki2003}.
    Suppose that $\Phi_s$ is EB, then the set $C$ of Proposition \ref{prop:cond_full_nM} is the set of separable states, which is preserved by local CPTP maps.
    Then, for $(\Phi_t\otimes\operatorname{id})[\rho]$ to be entangled, the dynamics must be SHnCPD.
\end{example}

\begin{example}[Incompatibility breaking channel]
    \label{ex:IBC}
    A channel $\Phi^*$ is said to be incompatibility breaking (IB) if \cite{Heinosaari2015-ibc}
    \begin{equation}
        I(\Phi^*[E], \Phi^*[F]) = 0\qquad\forall E,F.
    \end{equation}
    Then, the monotone $M$ of Proposition \ref{prop:cond_full_nM_2} is the incompatibility monotone $I$, which is contractive under CPU maps.
    Then, if $\Phi_s^*$ is IB and $\Phi_t^*$ is not, the dynamics must be SHnCPD.
\end{example}

\subsection{General criteria for non-Markovianity in both pictures}
\label{subsec:criteria_SHnCPD}
In Sec.~\ref{subsec:divisibility_def}, we recalled how the breakdown of P-divisibility, and hence non-Markovianity, in either the \schro or the Heisenberg picture can be witnessed by the non-monotonicity of, respectively, the trace norm for at least a pair of states or the operator norm for at least a pair of effects.
Here we show that stronger criteria relying on the non-monotonicity of the trace and operator norm over all states and effects allow one to infer, respectively, HnPD for SnPD dynamics and SnPD for HnPD dynamics.
\begin{proposition}
    \label{prop:criteria_SHnCPD}
    Suppose that the dynamics $\Phi_t = \Phi_{t,s}^S\circ\Phi_s$ is SnPD and one of the following conditions holds:
    \begin{enumerate}
        \item The maximal distance $\norm\cdot_1$ between states at time $t$ is greater than at time $s$
        \begin{equation}
            \max_{\rho,\sigma} \norm{\Phi_t\big[\rho-\sigma\big]}_1 > \max_{\rho,\sigma} \norm{\Phi_s\big[\rho-\sigma\big]}_1.
        \end{equation}
        \item For all states $\rho\ne\sigma$, it holds that
        \begin{equation}
            \label{eq:rev_TD_all_states}
            \norm{\Phi_t\big[\rho-\sigma\big]}_1 >  \norm{\Phi_s\big[\rho-\sigma\big]}_1.
        \end{equation}
    \end{enumerate}
    Then the dynamics is SHnPD.

    Conversely, if the dynamics is HnPD and either
    \begin{equation}
        \max_{E,F} \norm{\Phi_t^*\big[E-F\big]}_\infty > \max_{E,F} \norm{\Phi_s^*\big[E-F\big]}_\infty,
    \end{equation}
    or
    \begin{equation}\label{eq:eff}
        \norm{\Phi_t^*\big[E-F\big]}_\infty >  \norm{\Phi_s^*\big[E-F\big]}_\infty,
    \end{equation}
    for all effects $E\ne F$, then the dynamics is SHnPD.
\end{proposition}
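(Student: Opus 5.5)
Since the dynamics is SnPD by hypothesis, the plan is to show that under either condition it cannot be HPD. I argue by contradiction: suppose $\Phi_{t,s}^H$ is positive and trace preserving. It is trace preserving because $\Phi_t$ and $\Phi_s$ are, so $\Phi_{t,s}^H=\Phi_s^{-1}\circ\Phi_t$ preserves the trace.

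\emph{Condition 1.} This is essentially Proposition~\ref{prop:cond_full_nM_2} at the level of positivity, with $M$ taken as the trace distance of a pair of states. If $\Phi_{t,s}^H$ is PTP, then for any states $\rho,\sigma$ the operators $\rho'=\Phi_{t,s}^H[\rho]$ and $\sigma'=\Phi_{t,s}^H[\sigma]$ are again states. Using $\Phi_t=\Phi_s\circ\Phi_{t,s}^H$ we get
\begin{equation}
\norm{\Phi_t[\rho-\sigma]}_1=\norm{\Phi_s[\rho'-\sigma']}_1\le\max_{\rho,\sigma}\norm{\Phi_s[\rho-\sigma]}_1 .
\end{equation}
Maximizing the left-hand side over $\rho,\sigma$ contradicts the strict inequality in condition 1. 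Only positivity of $\Phi_{t,s}^H$ is used here, not complete positivity.

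\emph{Condition 2.} Again assume $\Phi_{t,s}^H$ is PTP. The idea is to apply the strict inequality~\eqref{eq:rev_TD_all_states} to states realizing the maximum at time $t$ and pull them back through $\Phi_{t,s}^H$.
\begin{enumerate}
\item Choose $\rho_0\ne\sigma_0$ attaining $\max_{\rho,\sigma}\norm{\Phi_t[\rho-\sigma]}_1$. The maximum exists because the state space is compact, and it is nonzero because $\Phi_t$ is invertible.
\item Set $\rho'=\Phi_{t,s}^H[\rho_0]$ and $\sigma'=\Phi_{t,s}^H[\sigma_0]$. These are states, and $\rho'\ne\sigma'$ because $\Phi_{t,s}^H$ is invertible.
\item Applying~\eqref{eq:rev_TD_all_states} to $\rho',\sigma'$ and then using the maximality of $\rho_0,\sigma_0$ gives
\begin{equation}
\norm{\Phi_s[\rho'-\sigma']}_1=\norm{\Phi_t[\rho_0-\sigma_0]}_1\ge\norm{\Phi_t[\rho'-\sigma']}_1>\norm{\Phi_s[\rho'-\sigma']}_1 ,
\end{equation}
which is a contradiction.
\end{enumerate}
Equivalently, condition 2 applied to the maximizers at time $s$ implies condition 1, so this case reduces to the previous one. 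The only delicate points are existence of the maximizers and the requirement $\rho\ne\sigma$, and both are settled by compactness and invertibility.

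\emph{Converse (Heisenberg) statements.} These follow by the same argument with effects in place of states and the operator norm in place of the trace norm.
\begin{enumerate}
\item Suppose the dynamics is SPD. Then $\Phi^S_{t,s}$ is positive and trace preserving, so its dual $(\Phi^S_{t,s})^*$ is positive and unital.
\item From $\Phi_t=\Phi_{t,s}^S\circ\Phi_s$ we get $\Phi_t^*=\Phi_s^*\circ(\Phi_{t,s}^S)^*$.
\item A positive unital map sends effects to effects: if $0\le E\le\id$, then $0\le(\Phi_{t,s}^S)^*[E]\le\id$. Hence $E'=(\Phi^S_{t,s})^*[E]$ and $F'=(\Phi^S_{t,s})^*[F]$ are effects.
\item Rewriting $\norm{\Phi_t^*[E-F]}_\infty=\norm{\Phi_s^*[E'-F']}_\infty$ and repeating the two arguments above, with maximizers over the compact set of effects, contradicts either hypothesis.
\end{enumerate}
Thus an HnPD dynamics satisfying either condition is also SnPD, i.e.\ SHnPD.
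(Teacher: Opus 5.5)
Your proof is correct and follows essentially the paper's route. For condition 1, you assume HPD and use $\Phi_t=\Phi_s\circ\Phi_{t,s}^H$ with $\Phi_{t,s}^H$ mapping states to states to get $\max_{\rho,\sigma}\norm{\Phi_t[\rho-\sigma]}_1\le\max_{\rho,\sigma}\norm{\Phi_s[\rho-\sigma]}_1$. For condition 2, the reduction through the time-$s$ maximizers that you mention as an alternative is exactly the paper's argument, and your main variant of pulling back the time-$t$ maximizers is an equally valid rephrasing. The Heisenberg part via $\Phi_t^*=\Phi_s^*\circ{\Phi_{t,s}^S}^*$ is what the paper means by ``along the same lines.''
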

\begin{proof}
    Suppose that the dynamics is HPD and let $\Delta = \rho-\sigma\ne0$ then
    \begin{equation}
        \label{eq:contractivity_max_TD}
        \begin{split}
            \max_\Delta\norm{\Phi_t\Delta}_1
        &= \max_\Delta\norm{\Phi_s\left[\Phi_{t,s}^H[\Delta]\right]}_1\\
        &= \max_{\Delta^\prime = \Phi_{t,s}^H[\Delta]}\norm{\Phi_s\Delta^\prime}_1\\
        &\le \max_\Delta\norm{\Phi_s\Delta}_1,
        \end{split}
    \end{equation}
    where the inequality holds since {the maximum on $\Delta$ is over Hermitian traceless operators with $\norm\Delta_1\le1$, while $\Delta^\prime$ has norm smaller than $\Delta$, due to contractivity of $\norm\cdot_1$ under CPTP maps.}
    Let now $\bar\rho$ and $\bar\sigma$ be the states corresponding to $\Delta$ maximizing the trace norm at time $s$, then
    \begin{equation}
        \begin{split}
            \norm{\Phi_t\big[\bar\rho-\bar\sigma\big]}_1 &\le\max_{\rho,\sigma}\norm{\Phi_t\big[\rho-\sigma\big]}_1 \\
            &\le\max_{\rho,\sigma}\norm{\Phi_s\big[\rho-\sigma\big]}_1\\
            &=\norm{\Phi_s\big[\bar\rho-\bar\sigma\big]}_1,
        \end{split}
    \end{equation}
    where in the second inequality Eq.~\eqref{eq:contractivity_max_TD} was used.
    Therefore, HPD implies the existence of two states whose distance $\norm\cdot_1$ decreases from $s$ to $t$.
    By taking the negation of this implication, we obtain that the second condition also implies HnPD.

    The proof for the first part of the Proposition goes along the same lines.
\end{proof}






Notice in particular that SnPD implies that Eq.~\eqref{eq:rev_TD_all_states} holds for at least one pair of states;
if it holds for all states, then the dynamics must be also HnPD. Analogously, HnPD implies Eq.~(\ref{eq:eff}) for at least one pair
of effects, and if it holds for all effects it also implies SnPD.

\subsection{Witness for non-Markovianity in both pictures}
\label{subsec:witness_SHnCPD}
We now provide a witness for non-Markovianity in both pictures, i.e. for SHnCPD, that combines the witness for SnPD and HnPD of Eqs.~\eqref{eq:contractivity_TD} and Eq.~\eqref{eq:contractivity_OD} respectively.
Such witness relies on the fact that SnPD and HnPD imply contractivity of the guessing probability between states or effects of Eqs.~\eqref{eq:guess_states} and \eqref{eq:guess_effects}.
It is possible to combine such two guessing probabilities by defining the norm of the channel $\Phi$ as
\begin{equation}
    \norm\Phi_{{1\to1}}\coloneqq \max_{\Delta\in\mathcal S_1}\max_{E\in\mathcal S_\infty}\tr\big[\Phi[\Delta]\,E\big],
\end{equation}
where $\mathcal S_p\coloneqq\big\{X\in\mathcal B(\mathscr H)\,\vert\,\norm X_p\le1,\,\tr[X]=0\big\}$ is the unit sphere in the $\norm\cdot_p$ norm.
Notice that if one only maximizes over effects (states), then the TD (OD) is recovered, as in Eqs.~\eqref{eq:TD}, \eqref{eq:OD}.

\begin{proposition}
    \label{prop:revivals_norm_SHnCPD}
    If the norm $\norm\cdot_{1\to1}$ is non-monotonic in time
    \begin{equation}
        \norm{\Phi_t}_{1\to1} >\norm{\Phi_s}_{1\to1},\qquad t>s,
    \end{equation}
    then the dynamics SHnCPD.
\end{proposition}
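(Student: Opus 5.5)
We argue by contraposition: assuming the dynamics is SCPD or HCPD, we show $\norm{\Phi_t}_{1\to1}\le\norm{\Phi_s}_{1\to1}$ for $t>s$. The key observation is that the norm can be read in both pictures. Using duality, $\tr\big[\Phi[\Delta]\,E\big]=\tr\big[\Delta\,\Phi^*[E]\big]$. Hence $\norm\Phi_{1\to1}$ is both the maximum over $\Delta\in\mathcal S_1$ of a quantity bounded by $\norm{\Phi[\Delta]}_1$-type expressions, and the maximum over $E\in\mathcal S_\infty$ of a quantity controlled by $\norm{\Phi^*[E]}_\infty$. This mirrors the structure of Proposition \ref{prop:cond_full_nM_2}: one picture is handled by contractivity of the propagator acting "after", the other by shrinking the maximization domain "before".

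\emph{Case SCPD.} Write $\Phi_t=\Phi^S_{t,s}\circ\Phi_s$ and move the propagator onto the effect side:
\begin{equation}
\tr\big[\Phi_t[\Delta]E\big]=\tr\big[\Phi_s[\Delta]\,{\Phi^S_{t,s}}^*[E]\big].
\end{equation}
Since $\Phi^S_{t,s}$ is CPTP (positivity suffices), ${\Phi^S_{t,s}}^*$ is positive and unital. It is therefore contractive in $\norm\cdot_\infty$ on Hermitian operators, as recalled around Eq.~\eqref{eq:contractivity_OD}. So $E^\prime={\Phi^S_{t,s}}^*[E]$ lies in the $\norm\cdot_\infty$ unit ball. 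We then bound the maximum over $E$ by the maximum over the larger set containing all such $E^\prime$, and obtain $\norm{\Phi_t}_{1\to1}\le\norm{\Phi_s}_{1\to1}$.

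\emph{Case HCPD.} Write $\Phi_t=\Phi_s\circ\Phi^H_{t,s}$, so that
\begin{equation}
\tr\big[\Phi_t[\Delta]E\big]=\tr\big[\Phi_s[\Delta^\prime]E\big],\qquad \Delta^\prime=\Phi^H_{t,s}[\Delta].
\end{equation}
Because $\Phi^H_{t,s}$ is CPTP, $\Delta^\prime$ is traceless and Hermitian with $\norm{\Delta^\prime}_1\le\norm\Delta_1\le1$, by trace-norm contractivity as in Eq.~\eqref{eq:contractivity_TD}. Thus $\Delta^\prime\in\mathcal S_1$, and maximizing over the image set gives at most the maximum over all of $\mathcal S_1$. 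This is exactly the step used in Eq.~\eqref{eq:contractivity_max_TD}.

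\emph{Expected obstacle.} The delicate point is the domain bookkeeping in the SCPD case. The set $\mathcal S_\infty$ is defined with the constraint $\tr[X]=0$, and a unital map need not preserve tracelessness of $E$. I would handle this by noting that, since $\Phi_s[\Delta]$ is traceless, adding multiples of $\id$ to $E$ does not change $\tr[\Phi_s[\Delta]E]$. Concretely, I would replace $E^\prime$ by $E^\prime-\tfrac{\tr E^\prime}{d}\id$; if that shift spoils the norm bound, I would instead argue that the inner maximum is effectively over the unit ball of $\norm\cdot_\infty$ modulo $\id$. Either way the inner maximum over $E$ is unchanged by allowing non-traceless $E$, as long as the same convention is used at $s$ and $t$. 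One may also simply adopt the convention, as in Eqs.~\eqref{eq:TD} and \eqref{eq:OD}, that the effect-side maximum runs over the full $\norm\cdot_\infty$ unit ball.

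With both cases giving monotonicity, a strict revival $\norm{\Phi_t}_{1\to1}>\norm{\Phi_s}_{1\to1}$ excludes both SPD and HPD. It therefore implies SHnPD, and in particular SHnCPD.
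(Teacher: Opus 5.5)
Your route is the paper's: in the SCPD case you move $\Phi^S_{t,s}$ onto the effect side, and in the HCPD case you use $\Phi^H_{t,s}[\mathcal S_1]\subseteq\mathcal S_1$. The HCPD half is fine.

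The obstacle you flag in the SCPD half is genuine. The paper's proof glosses over it by asserting that every CPU map sends $\mathcal S_\infty$ into itself. But $\tr\big[{\Phi^S_{t,s}}^*[E]\big]=\tr\big[E\,\Phi^S_{t,s}[\id]\big]$, so tracelessness is lost whenever $\Phi^S_{t,s}$ is non-unital. Your way of closing the gap does not work, because your claim that the inner maximum is unchanged by allowing non-traceless $E$ is false for $d\ge3$. Your shift $E^\prime-\tfrac{\tr E^\prime}{d}\id$ is the only traceless operator of the form $E^\prime+c\,\id$, and it can have operator norm above one. The ``unit ball modulo $\id$'' is just the full unit ball in disguise. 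For instance, with $Y=\operatorname{diag}(1,-\tfrac12,-\tfrac12)$ only the diagonal of $E$ enters $\tr[YE]$, so $\max_{E\in\mathcal S_\infty}\tr[YE]=\max\{e_1-\tfrac12(e_2+e_3):\abs{e_i}\le1,\ \sum_i e_i=0\}=\tfrac32$, whereas the full ball gives $\tr\abs{Y}=2$.

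With the traceless $\mathcal S_\infty$, the SCPD half of the statement is in fact false. Take $P_0=\ketbra{0}$, $\omega=\tfrac12(\ketbra{1}+\ketbra{2})$, and the measure-and-prepare channels $\Gamma[X]=\tr[P_0X]\ketbra{0}+\tr[(\id-P_0)X]\,\omega$ and $\Lambda[X]=\tr[P_0X]\ketbra{0}+\tr[(\id-P_0)X]\ketbra{1}$. For traceless $\Delta$, $\Gamma[\Delta]=\tr[P_0\Delta]\operatorname{diag}(1,-\tfrac12,-\tfrac12)$ and $\Lambda\circ\Gamma[\Delta]=\tr[P_0\Delta]\operatorname{diag}(1,-1,0)$. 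Hence $\norm{\Lambda\circ\Gamma}_{1\to1}=2c>\tfrac32c=\norm{\Gamma}_{1\to1}$, with $c=\max_{\Delta\in\mathcal S_1}\abs{\tr[P_0\Delta]}>0$. Both channels are idempotent, so $e^{-\tau}\operatorname{id}+(1-e^{-\tau})\Gamma$ and $e^{-\tau}\operatorname{id}+(1-e^{-\tau})\Lambda$ are GKSL semigroups with generators $\Gamma-\operatorname{id}$ and $\Lambda-\operatorname{id}$. Concatenating them for long enough times gives an invertible SCPD dynamics whose $\norm\cdot_{1\to1}$ shows a strict revival, by continuity.

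What your argument does prove is the statement with the effect-side maximum taken over the full ball $\{E=E^\dagger:\norm{E}_\infty\le1\}$, which is your last option. There both steps go through verbatim and even yield SHnPD. But then $\norm{\Phi}_{1\to1}=\max_{\Delta\in\mathcal S_1}\tr\abs{\Phi[\Delta]}$ is, up to normalization, just the maximal trace distance between evolved states. The proposition then reduces to condition 1 of Proposition~\ref{prop:criteria_SHnCPD}, without the SnPD hypothesis. With the traceless definition, your proof (and the paper's) is valid only in two cases: for $d=2$, where $\norm{E-\tfrac{\tr E}{2}\id}_\infty\le\norm{E}_\infty$, or for unital $\Phi^S_{t,s}$. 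Fix one definition and state the corresponding restriction, rather than treating the two conventions as interchangeable.
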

\begin{proof}
    Notice that any CPTP map sends $\mathcal S_1$ into itself, while any CPU map sends $\mathcal S_\infty$ into itself.
    First suppose that the dynamics is SCPD, so that
    \begin{equation}
        \begin{split}
            \norm{\Phi_t}_{1\to1} &= \norm{\Phi_{t,s}^S\circ\Phi_s}_{1\to1} = \max_{\Delta\in\mathcal S_1}\max_{E\in\mathcal S_\infty}\tr\big[\Phi_{t,s}[\Phi_s[\Delta]]\,E\big]\\
            &= \max_{\Delta\in\mathcal S_1}\max_{E\in\mathcal S_\infty}\tr\big[\Phi_s[\Delta]\,{\Phi_{t,s}^S}^*[E]\big]\\
            &= \max_{\Delta\in\mathcal S_1}\max_{E^\prime\in{\Phi_{t,s}^S}^*[\mathcal S_\infty]}\tr\big[\Phi_s[\Delta]\,E^\prime\big]\\
            &\le \max_{\Delta\in\mathcal S_1}\max_{E\in\mathcal S_\infty}\tr\big[\Phi_s[\Delta]\,E\big] = \norm{\Phi_s}_{1\to1}
        \end{split}
    \end{equation}
    since ${\Phi_{t,s}^S}^*[\mathcal S_\infty]\subseteq\mathcal S_\infty$.
    If, instead, it is HCPD, contractivity follows similarly
    \begin{equation}
        \begin{split}
            \norm{\Phi_t}_{1\to1} &= \norm{\Phi_s\circ\Phi_{t,s}^H}_{1\to1} = \max_{\Delta\in\mathcal S_1}\max_{E\in\mathcal S_\infty}\tr\big[\Phi_{t,s}^H[\Delta]\,\Phi_s^*[E]\big]\\
            &= \max_{\Delta\in\Phi_{t,s}^H[\mathcal S_1]}\max_{E\in\mathcal S_\infty}\tr\big[\Delta\,\Phi_s^*[E]\big]\le\norm{\Phi_s}_{1\to1},
        \end{split}
    \end{equation}
    where it was used instead $\Phi_{t,s}^H[\mathcal S_1]\subseteq\mathcal S_1$.
\end{proof}

%


\subsection{Volume of accessible measurements}
\label{subsec:volume_measurements}
A widely used witness for SnCPD is the so-called volume of accessible states \cite{Lorenzo2013}.
Here, we show that such witness actually detects SHnCPD.

In \cite{Lorenzo2013}, the time evolution of an infinitesimal volume of the set of quantum states $dV_{\text s}(0)$ was considered.
Due to the interaction with the environment, the volume of accessible states at a later time $t$ evolves as
\begin{equation}
    dV_{\text s}(t) = \abs{\det\Phi_t}\,dV_{\text s}(0) \le dV_{\text s}(0),
\end{equation}
and it decreases monotonically under CPTP maps \cite{Wolf2008}.
Therefore, revivals in time of the volumes of accessible states $dV_{\text s}(t) > dV_{\text s}(s)$ implies that $\Phi_{t,s}^S$ is not CP and were therefore used as a witness for SnCPD.
Such revival is equivalent to $d\abs{\det\Phi_t}/dt >0$.

In an analogous way, by working in the \heis picture, one can also define the volume of accessible measurements as
\begin{equation}
    dV_{\text e}(t) = \abs{\det\Phi_t^*}\,dV_{\text e}(0) \le dV_{\text e}(0).
\end{equation}
In a similar way as done in \cite{Wolf2008}, it can be proven that $dV_{\text e}(t)$ is monotonically decreasing under HCPD dynamics, and therefore revivals witness HnCPD.

On the other hand, it holds that $\det\Phi_t^* = \overline{\det\Phi_t}$, where $\overline z$ represents the complex conjugation, and therefore $\abs{\det\Phi_t^*} = \abs{\det\Phi_t}$.
This, in turns, implies that
\begin{equation}
    \frac{dV_{\text s}(t)}{dV_{\text s}(0)} = \frac{dV_{\text e}(t)}{dV_{\text e}(0)} = \abs{\det\Phi_t}
\end{equation}
and thus revivals of $dV_{\text s}(t)$ are present if and only if there are revivals of $dV_{\text e}(t)$, which implies that the dynamics must be SnCPD and HnCPD.
Therefore, the witness of non-Markovianity in terms of revivals of the volume of accessible states introduced in \cite{Lorenzo2013} is actually a witness of SHnCPD.

\subsection{Choi state witness for non-Markovianity in the Heisenberg picture}
In the \schro picture, there are necessary and sufficient conditions for witnessing both P and CP divisibility.
In particular, SnPD is equivalent to non-monotonicity in $\norm\cdot_1$ \cite{BLP, BLP-PRA}, while SnCPD is equivalent to \cite{RHP}
\begin{equation}
    \label{eq:RHP_Schro}
    g_S(t) = \lim_{\varepsilon\to0^+}\frac{\norm{\left(\Phi_{t+\varepsilon,t}^S\otimes\operatorname{id}\right)\left[\ketbra\Psi\right]}_1-1}{\varepsilon}>0,
\end{equation}
where $\Psi$ is the maximally entangled state
\begin{equation}
    \ket\Psi = \frac1{\sqrt d}\sum_i\ket i\otimes\ket i.
\end{equation}
HnPD, instead, is equivalent to non-monotonicity of $\norm\cdot_\infty$ \cite{Settimo-SchroHeis}.
We now introduce a necessary and sufficient condition for HnCPD analogous to Eq.~\eqref{eq:RHP_Schro}.

It would be tempting to introduce a witness by using ${\Phi_{t+\varepsilon,t}^H}^*$ instead of $\Phi_{t+\varepsilon,t}^S$ and $\norm\cdot_\infty$ instead of $\norm\cdot_1$ in Eq.~\eqref{eq:RHP_Schro}.
However, this would not give a necessary condition for HnCPD, since there exist PU but not CPU maps such that $\norm{\left(\Lambda^*\otimes\operatorname{id}\right)\ketbra\Psi}_\infty = \norm{\ketbra\Psi}_\infty$, as, for instance, for $\Lambda$ being the partial transposition.

However, it is possible to derive a necessary and sufficient condition for HnCPD as
\begin{equation}
    \label{eq:RHP_Heis}
    g_H(t) = \lim_{\varepsilon\to0^+}\frac{\norm{\left({\Phi_{t+\varepsilon,t}^H}^*\otimes\operatorname{id}\right)\left[\ketbra\Psi\right]}_1-1}{\varepsilon}>0.
\end{equation}
Notice that the norm $\norm\cdot_1$ is used since, for any CPU map, $\Lambda^*\otimes\operatorname{id}$ is also TP on $\operatorname{span}\{\ketbra\Psi\}$:
\begin{equation}
    \begin{split}
        \tr\left[\left(\Lambda^*\otimes \operatorname{id}\right)[\ketbra\Psi]\right] &= \tr_1\left[\tr_2\left[\left(\Lambda^*\otimes\operatorname{id}\right)[\ketbra\Psi\right]\right] \\
        &= \frac1d\tr_1\left[\Lambda^*\left[\id\right]\right] = \frac1d\tr_1\left[{\id}\right] =1.
    \end{split}
\end{equation}
{Therefore, the Choi operator $\left(\Lambda^*\otimes \operatorname{id}\right)[\ketbra\Psi]$ is a density matrix, and thus its norm $\norm\cdot_1$ is contractive under composition with CP maps.}
Using Eq.~\eqref{eq:contractivity_TD} and the results of \cite{RHP}, it is thus possible to conclude that $g_H(t)>0$ if and only if the dynamics is HnCPD.

Furthermore, the quantity in Eq.~\eqref{eq:RHP_Heis} can be written in terms of the coefficients of the \heis picture master equation \eqref{eq:ME_Heis}.
Using the fact that $g_H(t)\ne0$ is equivalent to the positivity of $(\id-\ketbra\Psi)\mathcal R^*_t[\ketbra\Psi](\id-\ketbra\Psi)$, one can in fact show that \cite{Hall2014}
\begin{equation}
    g_H(t) = \frac 2d\sum_\alpha\max\left\{0,\,-\xi_\alpha(t)\right\}.
\end{equation}
Similarly, in the \schro picture one has \cite{Hall2014}
\begin{equation}
    g_S(t) = \frac 2d\sum_\alpha\max\left\{0,\,-\gamma_\alpha(t)\right\}.
\end{equation}

\section{Relevant quantum information tasks}
\label{sec:tasks}
We now apply the results of Sec.~\ref{sec:divisibility_both} to relevant quantum information tasks and provide a characterization of whether SnCPD, HnCPD, or SHnCPD is required in order to have revivals in time of the task performance.
One such task was already presented in \cite{revivals_nonlocality}, in which we demonstrated that SHnCPD is required in order to have revivals in times of the violation of Bell inequalities for any state and set of local measurements.

\subsection{Capacity}
\label{subsec:capacity}
As a first relevant quantum information task, we consider that of transmitting information through a quantum channel.
The amount of information that can be transmitted via a quantum channel is quantified by the capacity of the latter and, depending on the task at hand, there are different definitions for it.

If one wants to transmit classical information using a quantum channel, the amount of transmitted information per use of the channel is quantified by its classical capacity \cite{Holevo2012}
\begin{equation}
    C_C(\Phi) = \max_{\{p_i,\rho_i\}} S\left(\sum_i p_i\Phi[\rho_i]\right)-\sum_i p_i S\left(\Phi[\rho_i]\right),
\end{equation}
where $S(\rho) = -\tr[\rho\log\rho]$ is the von Neumann entropy.

If, instead, one wants to transmit classical information via the quantum channel, but allowing for the system to be entangled with some ancilla, the channel capacity is the so-called entanglement-assisted classical capacity \cite{Bennett2002, Wilde2013}
\begin{equation}
    C_E(\Phi) = \max_\rho S\left(\rho\right) + S\left(\Phi[\rho]\right) - S\left((\Phi\otimes\operatorname{id})\left[\ketbra{\psi_\rho}\right]\right),
\end{equation}
where $\psi_\rho$ is any purification of $\rho$.

Lastly, if one wants to transmit quantum information, the amount of transmittable information is given by the quantum capacity \cite{Lloyd1997}
\begin{equation}
    C_Q(\Phi) = \lim_{n\to\infty}\frac1n\max_\rho I_C\left(\rho,\Phi^{\otimes n}\right),
\end{equation}
where $I_C(\rho,\Lambda) = S(\Lambda[\rho]) - S(\Lambda^c[\rho])$ and $\Lambda^c$ is the complementary channel \cite{Holevo2012}.

All such capacities are contractive under 
the action of CP maps, i.e. if $\Phi_{t,s}^S$ is CPTP then
\begin{equation}
    C_{\alpha}\left(\Phi_{t,s}^S\circ\Phi_s\right) \le C_\alpha\left(\Phi_s\right),\qquad \alpha = C,E,Q
\end{equation}
and therefore revivals in any such capacity can be used to witness SnCPD.
In particular, a measure of non-Markovianity in terms of the capacity can be introduced as \cite{Bylicka2014}
\begin{equation}
    \label{eq:non-Markov_capacity}
    \mathcal N_\alpha(\Phi) = \int_{\frac d{dt}C_\alpha(\Phi_t)>0}dt\,\frac d{dt}C_\alpha(\Phi_t)\qquad \alpha = C,E,Q.
\end{equation}
Naturally, it holds that $\mathcal N_\alpha(\Phi)=0$ whenever the dynamics is SCPD.

However, it is easy to verify that the capacity is contractive also under HCPD dynamics, i.e.
\begin{equation}
    \label{eq:contractivity_C_Heis}
    C_\alpha\left(\Phi_s\circ\Phi_{t,s}^H\right)\le C_\alpha\left(\Phi_s\right)
\end{equation}
if $\Phi_{t,s}^H$ is CPTP.
The contractivity of the capacity follows from Prop.~\ref{prop:cond_full_nM_2}, since $C_\alpha$ can always be written as the maximum over all possible input ensembles of a monotonic functional under CPTP maps.
Therefore, the measure of non-Markovianity of Eq.~\eqref{eq:non-Markov_capacity} can be non zero only if both $\Phi_{t,s}^S$ and $\Phi_{t,s}^H$ are not CP, i.e. if the dynamics is SHnCPD.
In other words, in order to have revivals in time of the channel capacity, then one needs both SnCPD and HnCPD dynamics.

\begin{example}[Capacity]
    \label{ex:capacity}
    We now provide an example of a dynamics which is SnCPD but HCPD, for which the capacity does indeed diminish monotonically in time.
    We assume that the dynamics up to time $s$ is of dephasing form
    \begin{equation}
        \label{eq:dephasing}
        \Phi_s[\rho] = \Phi_s^*[\rho] = \frac{1+d}2\,\rho + \frac{1-d}2\,\sigma_z\rho\sigma_z,
    \end{equation}
    with the dephasing parameter $d$ assumed to be a monotonic function of time $s$.
    Then, between $s$ and $t$, the dynamics is described in the \heis picture by the unitary transformation
    \begin{equation}
        \label{eq:unitary_after_dephasing}
        {\Phi_{t,s}^H}^*[X] = U_{t-s}^\dagger X U_{t-s}, \qquad U_{t-s}=e^{-i \sigma_y (t-s)}.
    \end{equation}
    If one fixes the final time $t = s+\pi/4$, then the unitary is such that $U_{\pi/4}\sigma_z U^\dagger_{\pi/4} = \sigma_x$ and $U_{\pi/4}\sigma_x U_{\pi/4}^\dagger = -\sigma_z$.
    The dynamical map $\Phi_t$ at time $t$ is also of dephasing form, but in the $\sigma_x$ direction instead of the $\sigma_z$ direction of $\Phi_s$ and with the same dephasing parameter $d$, up to unitaries.
    Such dynamics is HCPD, but it is SnCPD \cite{revivals_nonlocality}.
    
    The classical capacity $C_C(\Phi)$ is trivially constant over time, since one can perfectly transmit classical information both at time $s$ by encoding it in the eigenbasis of $\sigma_z$ (due to the fact that $\Phi_s[\sigma_z] = \sigma_z$) and at time $t$ by encoding it in the eigenbasis of $\sigma_x$ (since $\Phi_t[\sigma_x] = \sigma_x$).
    Therefore
    \begin{equation}
        C_C\left(\Phi_t\right) = C_C\left(\Phi_s\right) = 1.
    \end{equation}
    Notice that the maximal classical capacity is achieved at all times.
    
    For the entanglement-assisted classical capacity, it can be explicitly written as \cite{Bylicka2014, Devetak2005_capacity}
    \begin{equation}
        C_E\left(\Phi_s\right) = 2-h_2\left(\frac{1+d}2\right), 
    \end{equation}
    where $h_2$ is the binary entropy $h_2(p) = -p\log_2p - (1-p)\log_2(1-p)$.
    But, since $\Phi_t$ is also of dephasing form with the same dephasing parameter, it holds that
    \begin{equation}
        C_E\left(\Phi_t\right) = C_E\left(\Phi_s\right) = 2-h_2\left(\frac{1+d}2\right),
    \end{equation}
    and therefore also $C_E$ does not present any revival.
    Notice that the entanglement-assisted classical capacity is strictly smaller than the maximal value $C_E(\operatorname{id})=2$ unless $d=1$.
    Lastly, the quantum capacity can be rewritten as $C_Q(\Phi_t) = C_E(\Phi_t) - 1$, and therefore does not present revivals either.
    Therefore, for the dynamics of Eqs.~\eqref{eq:dephasing}, \eqref{eq:unitary_after_dephasing}, which is SnCPD but HCPD, it holds that $\mathcal N_C(\Phi) = \mathcal N_E(\Phi)=\mathcal N_Q(\Phi)=0$, as expected from Proposition \ref{prop:cond_full_nM_2}.
\end{example}

\subsection{Channel distinguishability}
\label{subsec:channel_distinguishability}
Let us now consider the scenario where one wants to distinguish between quantum channels.
Let $\Lambda_1$ and $\Lambda_2$ be two quantum channels prepared respectively with probabilities $p$ and $1-p$, then their distinguishability is given by \cite{Aharonov1998, Watrous2018}
\begin{equation}
    \label{eq:distinguish_channels}
    P_{\mathrm{guess}}^{\mathrm{ch}}\left(\Lambda_1,\Lambda_2,p\right) =\frac12\big(1+ \norm{p\Lambda_1-(1-p)\Lambda_2}_\diamond\big),
\end{equation}
where $\norm\cdot_\diamond$ is the so called diamond norm 
\begin{equation}
    \label{eq:diamond_norm}
    \norm\Lambda_\diamond = \max_{X,\norm X_1\le1}\norm{\left(\Lambda\otimes\operatorname{id}\right)[X]}_1,
\end{equation}
where $X$ is a bipartite operator with an ancilla that has the same dimension of the system, satisfying $\norm X_1\le1$.


The diamond norm is monotonic under postprocessing, i.e. if $\Phi$ is another CP channel, then $\norm{\Phi\circ\Lambda}_\diamond\le\norm{\Lambda}_\diamond$.
Therefore, if $\Phi_t$ is a SCPD dynamics, the guessing probability $P_{\mathrm{guess}}^{\mathrm{ch}}(\Phi_t\circ\Lambda_1,\Phi_t\circ\Lambda_2,p)$ is monotonically decreasing in time.
In fact, SCPD holds if and only if \cite{Bae2016}
\begin{equation}
    \label{eq:contractivity_guess_channels}
    P_{\mathrm{guess}}^{\mathrm{ch}}(\Phi_t\circ\Lambda_1,\Phi_t\circ\Lambda_2,p)\le P_{\mathrm{guess}}^{\mathrm{ch}}(\Phi_s\circ\Lambda_1,\Phi_s\circ\Lambda_2,p)
\end{equation}
for all $p\in[0,1]$, for all $t>s$ and for all CP maps $\Lambda_1$, $\Lambda_2$.
In other words, postprocessing with a SCPD dynamics $\Phi_t$ will diminish the distinguishability between any two quantum channels.

We now investigate what happens if one preprocesses with a CP map $\Phi$ instead of postprocessing, i.e. we consider now the distinguishability between $\Lambda_1\circ\Phi_t$ and $\Lambda_2\circ\Phi_t$.
To do so, we first show that the diamond norm can be written as
\begin{equation}
    \label{eq:norm_diamond_infty}
    \norm{\Lambda^*}_{\rm cb} =    \norm\Lambda_\diamond =  \max_{X,\norm X_\infty\le1}\norm{\left(\Lambda^*\otimes\operatorname{id}\right)[E]}_\infty.
\end{equation}
%
%
%
This implies in a straightforward way that the diamond norm is also contractive under preprocessing $\norm{\Lambda\circ\Phi}_\diamond\le\norm{\Lambda}_\diamond$.
Then, if $\Phi_t$ is HCPD, it holds that for all channels $\Lambda$
\begin{equation}
    \norm{\Lambda\circ\Phi_t}_\diamond=\norm{\Lambda\circ\Phi_s\circ\Phi_{t,s}^H}_\diamond\le\norm{\Lambda\circ\Phi_s}_\diamond.
\end{equation}
Therefore, preprocessing with a HCPD dynamics $\Phi_t$ will diminish the distinguishability between any two quantum channels.
Conversely, the distinguishability is also diminished by postprocessing with a SCPD dynamics.

By writing the diamond norm in terms of the norm $\norm\cdot_\infty$ as in Eq.~\eqref{eq:norm_diamond_infty} and using the fact that $\norm{(\Phi\otimes \operatorname{id})[X]}_\infty\le\norm X_\infty$ if and only if $\Lambda$ is CPU \cite{Paulsen2003}, it follows that
\begin{equation}
    \label{eq:contractivity_guess_channels_preproc}
    P_{\mathrm{guess}}^{\mathrm{ch}}(\Lambda_1\circ\Phi_t,\Lambda_2\circ\Phi_t,p)\le P_{\mathrm{guess}}^{\mathrm{ch}}(\Lambda_1\circ\Phi_s,\Lambda_2\circ\Phi_s,p)
\end{equation}
for all $p\in[0,1]$, $t>s$ and channels $\Lambda_1$, $\Lambda_2$ if and only if the dynamical map $\Phi_t$ is HCPD.
This result is the analog of Eq.~\eqref{eq:contractivity_guess_channels} of \cite{Bae2016} for preprocessing instead of postprocessing, 
and HCPD instead of SCPD.

\begin{example}[Channel distinguishability]
    \label{ex:distinguishability}
    Suppose that we want to distinguish between dephasing and the identity channel
    \begin{equation}
        \Lambda_1[\rho] = \frac34\rho+\frac14\sigma_z\rho\sigma_z,\qquad\Lambda_2=\operatorname{id},
    \end{equation}
    each with the same probability $p=1/2$.
    Assume that we preprocess or postprocess both channels with a noisy channel similar to that of Example \ref{ex:capacity}.
    In particular, we assume $\Phi_s$ to be as in Eq.~\eqref{eq:dephasing}, and
    \begin{equation}
        \label{eq:unitary_after_dephasing_Schro}
        \Phi_{t,s}^S[\rho] = U \rho U^\dagger, \qquad U=e^{-i \pi \sigma_y/4}.
    \end{equation}
    Such dynamics is SCPD but HnCPD \cite{revivals_nonlocality}.
    From the results of \cite{Bae2016}, it holds that the channel distinguishability is monotonic under postprocessing and, since $\Phi_{t,s}^S$ is unitary, it holds that
    \begin{equation}
        P_{\mathrm{guess}}^{\mathrm{ch}}(\Phi_t\circ\Lambda_1,\Phi_t\circ\Lambda_2)= P_{\mathrm{guess}}^{\mathrm{ch}}(\Phi_s\circ\Lambda_1,\Phi_s\circ\Lambda_2) = \frac d8,
    \end{equation}
    where we neglected the third argument $p=1/2$ for compactness.

    However, by virtue of HnCPD and Eq.~\eqref{eq:contractivity_guess_channels_preproc}, the distinguishability is not monotonic under preprocessing.
    Indeed, by direct calculation, it holds that
    \begin{equation}
        \begin{split}
            \frac12\left(1+\frac 18\right) &=P_{\mathrm{guess}}^{\mathrm{ch}}(\Lambda_1\circ\Phi_t,\Lambda_2\circ\Phi_t)\\
            &> P_{\mathrm{guess}}^{\mathrm{ch}}(\Lambda_1\circ\Phi_s,\Lambda_2\circ\Phi_s)=\frac12\left(1+\frac d8\right)
        \end{split}
    \end{equation}
    for any $d\ne 1$.
    Therefore, we have shown an explicit channel distinguishability task whose performance is monotonic under postprocessing but not under preprocessing, since the dynamics is SCPD but HnCPD.
\end{example}

\section{Classical stochastic dynamics}
\label{sec:classical}
We now derive the analogous results of the previous section for a classical stochastic dynamics.
Assume that the dynamics of a $d$-sites probability distribution evolves as $\mathbf p(t) = {T}(t)\mathbf p(0)\in\mathbb P_d$, where $T(t)$ is a $d\times d$ stochastic matrix obeying $T_{ij}(t)\ge0$ and $\sum_i T_{ij}(t)=1$, while $\mathbb P_d$ is the set of probability distributions on $d$ sites.
It is possible to write the master equation for $T$ in two ways
\begin{equation}
    \label{eq:ME_classical}
    \dot T(t) = L(t) T(t) = T(t) R(t),
\end{equation}
where $L(t)$ and $R(t)$ are the classical analogous of the generators $\mathcal L_t$ and $\mathcal R_t$ of Eqs.~\eqref{eq:ME_Schro}, \eqref{eq:ME_Heis} and they obey $\sum_i L_{ij}(t) = \sum_iR_{ij}(t) =0$.
The master equation for the probability vector $\mathbf p(t)$ reads
\begin{equation}
    \dot{\mathbf p}(t) = L(t)\mathbf p(t),
\end{equation}
while there is no master equation for $\mathbf p(t)$ in terms of $R(t)$.
Alternatively, one can define a classical version of the \heis picture by introducing a vector of observables $\mathbf x(t) = T^\top(t)\mathbf x(0)\in\mathbb O_d$, where $^\top$ denotes the transposition and $\mathbb O_d = \{\mathbf x\in\mathbb R^d\,\vert\, 0\le x_i\le1\}$, corresponding to the set of classical effects on $d$ sites;
indeed it holds that
\begin{equation}
    \big(\mathbf x(0),\mathbf p(t)\big) = \big(\mathbf x(t),\mathbf p(0)\big),
\end{equation}
with $(\mathbf a,\mathbf b) = \sum_ia_ib_i$.
This \heis picture vector obeys the master equation 
\begin{equation}
    \dot{\mathbf x}(t) = R^\top(t)\mathbf x(t).
\end{equation}

Similarly to Eqs.~\eqref{eq:prop_Schro}, \eqref{eq:prop_Heis}, one can define the classical propagators
\begin{equation}
    T(t) = T_S(t,s) T(s) = T(s) T_H(t,s),
\end{equation}
for which
\begin{equation}
    \mathbf p(t) = T_S(t,s)\mathbf p(s),\qquad \mathbf x(t) = T_H^\top(t,s) \mathbf x(s)
\end{equation}
for all $t>s$.
The dynamics is therefore said to be SPD if $T_S(t,s)$ is a valid stochastic matrix, and HPD if $T_H(t,s)$ is a stochastic matrix.
Notice that in the classical case P and CP are equivalent.
It holds that SPD is equivalent to
\begin{equation}
    \frac d{dt}\norm{T(t)\mathbf p}_1\le0\qquad\forall \mathbf p,
\end{equation}
while HPD is equivalent to \cite{Settimo-SchroHeis}
\begin{equation}
    \frac d{dt}\norm{T^\top(t)\mathbf x}_\infty\le0\qquad\forall \mathbf x.
\end{equation}
Like for the quantum case, SPD and HPD are not equivalent but, unlike the quantum version, at least for $d=2$ HPD implies SPD \cite{Settimo-SchroHeis}.

\subsection{Criteria for classical SHnPD}
\label{subsec:criteria_SHnPD_class}
We notice that all results obtained so far for quantum dynamical maps also hold for classical stochastic processes, upon using stochastic maps instead of CPTP maps.
For instance, the analog of Proposition \ref{prop:cond_full_nM} reads as follows.
\begin{proposition}
    Let $T(t) = T_S(t,s) T(s)$ and suppose that there exists a subset $C\subseteq \mathbb P_d$ such that
    \begin{itemize}
        \item $T(s)\mathbf p\in C$ for all $\mathbf p\in\mathbb P_d$,
        \item any stochastic map $S$ is such that $S(C)\subseteq C$.
    \end{itemize}
    If there exists $\bar{\mathbf p}$ such that $T(t)\bar{\mathbf p}\not\in C$, then the dynamics must be SHnPD.
\end{proposition}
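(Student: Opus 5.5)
The plan is to mirror the proof of Proposition~\ref{prop:cond_full_nM}, replacing CPTP maps by stochastic matrices. I will argue by contraposition: assuming the dynamics is either SPD or HPD, I will show that $T(t)\mathbf p\in C$ for every $\mathbf p\in\mathbb P_d$. This contradicts the existence of $\bar{\mathbf p}$, so the dynamics must be neither SPD nor HPD, i.e.\ SHnPD. In the classical setting P and CP coincide, so no ancilla (the analogue of $k>1$) is needed. Throughout I will use the invertibility of $T(s)$, which guarantees that both propagators $T_S(t,s)=T(t)T(s)^{-1}$ and $T_H(t,s)=T(s)^{-1}T(t)$ are well defined.

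\emph{Case SPD.} Here $T_S(t,s)$ is a stochastic matrix. For any $\mathbf p\in\mathbb P_d$, the first hypothesis gives $T(s)\mathbf p\in C$. The second hypothesis, applied with $S=T_S(t,s)$, then gives
\begin{equation}
T(t)\mathbf p = T_S(t,s)\big(T(s)\mathbf p\big)\in C.
\end{equation}

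\emph{Case HPD.} Here $T_H(t,s)$ is stochastic. I write $T(t)\mathbf p = T(s)\mathbf p'$ with $\mathbf p' = T_H(t,s)\mathbf p$. The one point to check is that $\mathbf p'\in\mathbb P_d$. Since $T_H(t,s)$ has nonnegative entries, $p'_i=\sum_j (T_H)_{ij}p_j\ge0$. Since its columns sum to one, $\sum_i p'_i=\sum_j p_j=1$. The first hypothesis applied to $\mathbf p'$ then gives $T(t)\mathbf p = T(s)\mathbf p'\in C$.

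In both cases $T(t)\mathbf p\in C$ for all $\mathbf p$, which proves the claim by contraposition. There is no real obstacle. The only steps needing care are the verification that a stochastic matrix maps $\mathbb P_d$ into itself, which is the classical counterpart of CPTP maps sending states to states, and keeping the order of the factors in $T(s)T_H(t,s)$ straight.
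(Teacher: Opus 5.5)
Your proof is correct and follows the paper's route: the paper only says the proof is analogous to that of Proposition~\ref{prop:cond_full_nM}, which treats the Schr\"odinger-divisible case by applying the invariance of $C$ under the propagator, and the Heisenberg-divisible case by noting that $\Phi^H_{t,s}$ maps states to states before applying $\Phi_s$, exactly as you do with $T_S(t,s)$ and $T_H(t,s)$. Your explicit check that a stochastic matrix maps $\mathbb P_d$ into itself is the one detail the paper leaves implicit.
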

The proof is analogous to the proof of Proposition \ref{prop:cond_full_nM}.
In a similar way, it is possible to derive the classical counterparts also of Propositions \ref{prop:cond_full_nM_2} and \ref{prop:criteria_SHnCPD}.
In particular, Proposition \ref{prop:criteria_SHnCPD} holds due to the fact that also in the classical case it is possible to write the norm $\norm\cdot_1$ as
\begin{equation}
    \norm{\mathbf p}_1 = \max_{\mathbf x}(\mathbf x,\mathbf p),
\end{equation}
where the maximization is taken over vectors $\mathbf x$ such that $\abs{x_i}\le1$ and the proof then follows in a similar way.

\subsection{Classical channel capacity}
\label{subsubsec:classical_capacity}
We now show that revivals in time of classical channel capacity also requires SHnPD, similar to the quantum counterpart of Sec.~\ref{subsec:capacity}.
The capacity of a classical stochastic channel $S$ reads \cite{Shannon1948, Cover2005}
\begin{equation}
    \label{eq:classical_capacity}
    C(T) = \max_{\mathbf p} I(\mathbf p : T\mathbf p),
\end{equation}
where $I$ is the mutual information $I(X:Y) = H(Y)-H(Y\vert X)$ and $H$ is the Shannon entropy $H(\mathbf p) = -\sum_i p_i\log p_i$.
The capacity is monotonic under postprocessing $C(S\,T)\le C(T)$ for any stochastic channels $S$ and $T$, and therefore SPD implies
\begin{equation}
    C\big(T(t)\big) = C\big(T_S(t,s)\,T(s)\big)\le C\big(T(s)\big).
\end{equation}

In App.~\ref{app:contractivity_cl_capacity}, we show that the capacity is also monotonic under HPD dynamics
\begin{equation}
    C\big(T(t)\big) = C\big(T(s)\,T_H(t,s)\big)\le C\big(T(s)\big)
\end{equation}
whenever $T_H(t,s)$ is a stochastic matrix.
Therefore, like in the quantum case, revivals in time of the classical capacity $C(T(t))$ require that the stochastic dynamics is SHnPD.
This implies that divisibility both in the \schro and in the \heis picture is relevant not only for open quantum system dynamics but also for classical stochastic dynamics.


\section{Conclusions}
\label{sec:conclusion}
In this work, we investigated the role of \schro and Heisenberg non-Markovianity in quantum information tasks.
While the two notions of divisibility are generally inequivalent, we showed that they are associated with different kinds of memory effects and can therefore become relevant for different operational scenarios.
In particular, some tasks, such as revivals of entanglement or incompatibility, require non-Markovianity only in one picture, whereas others, including revivals of channel capacities, require non-Markovianity in both pictures simultaneously.

We further derived general conditions under which revivals in certain quantities necessarily imply non-Markovianity in both pictures, characterized classes of dynamics for which non-Markovianity in one picture implies non-Markovianity in the other, and introduced witnesses capable of detecting non-Markovianity in both pictures.
In particular, we showed that revivals of the volume of accessible states actually witness this stronger form of non-Markovianity.
Lastly, we extended our analysis to classical stochastic processes, showing that the distinction between \schro and Heisenberg divisibility is not exclusive to the quantum setting.

Our results therefore provide a unified perspective on memory effects in both quantum and classical dynamics, and clarify which notion of non-Markovianity is relevant for different information-processing tasks.

\section*{Acknowledgements}
FS acknowledges support from Magnus Ehrnroothin S\"a\"ati\"o.
DC was supported by the Polish National Science Center under Projects No. 2024/55/B/ST2/01781.
The authors thank the Toru\'n group and the Aleksander Jab\l o\'nski Foundation for hospitality received.

\appendix
\section{Non Markovianity only in one picture for non-isotropic maps}
\label{app:nm_unitary}
Assume for simplicity that $\Phi_s$ is unital, so that it can be written on the ($d$-dimensional) Bloch sphere $\Phi_s=\operatorname{diag}\{\lambda_1,\ldots,\lambda_d\}$, with $0<\lambda_1\le\ldots\le\lambda_d\le1$ and at least one inequality is not equality, so that the map is not isotropic.
In particular it holds that $\lambda_1<\lambda_d$.
    
Let $\mathbf r =(1,0,\ldots,0)^\top$ and let $\Phi_{t,s}^S=\mathcal U$ be a unitary transformation such that $(1,0,\ldots,0)^\top\mapsto(0,\ldots,0,1)^\top$.
But then
\begin{equation}
    \Phi_{t,s}^H \mathbf r = \Phi_s\mathcal U\Phi^{-1}_s\mathbf r = \Phi_s\mathcal U\begin{pmatrix}
        1/\lambda_1\\0\\\vdots\\0 
    \end{pmatrix}=\Phi_s\begin{pmatrix}
        0\\\vdots\\0 \\1/\lambda_1
    \end{pmatrix} = \begin{pmatrix}
        0\\\vdots\\0 \\\lambda_n/\lambda_1
    \end{pmatrix},
\end{equation}
therefore $\mathbf r$ is mapped to a vector with norm $\lambda_n/\lambda_1>1$.
So, for any non-isotropic (therefore non-commutative) unital CP map $\Phi_s$, there always exists a unitary transformation mapped to a non-positive $\Phi_{t,s}^H$.

The non-unital case follows analogously, by taking  $\mathbf r$ to be in the direction of the maximal shrinking of the Bloch ball, and $\mathcal U$ mapping the direction of maximal shrinking to the direction of minimal shrinking.

\section{Dynamics non-Markovian in both pictures}
\label{app:nM_both}
Assume that $\Phi_{t,s}^S$ is a uniform expansion of the Bloch sphere and consider an arbitrary $\Phi_s$, compatible with $\Phi_t$ being CPTP.
Their Bloch representation can always be written as
\begin{equation}
    \Phi_{s} = \begin{pmatrix}
        1 &0\\\mathbf v^\top  & M
    \end{pmatrix},\qquad\Phi_{t,s}^S = \begin{pmatrix}
        0&0\\0&\lambda I
    \end{pmatrix},\qquad \lambda>1,
\end{equation}
so that
\begin{equation}
    \Phi_{t,s}^H = \begin{pmatrix}
        1&0\\(\lambda-1)M^{-1}\mathbf v^\top & \lambda I
    \end{pmatrix}
\end{equation}
which is also not positive, but (for non-unital dynamics) different from $\Phi_{t,s}^H$.

Similarly, if $ \Phi_{t,s}^H$ is a uniform expansion of the Bloch sphere, then 
\begin{equation}
    \Phi_{t,s}^S = \begin{pmatrix}
        1&0\\(\lambda-1)\mathbf v^\top & \lambda I
    \end{pmatrix}.
\end{equation}
The construction holds for any finite-dimensional Hilbert space.

\section{Contractivity of the classical capacity}
\label{app:contractivity_cl_capacity}
The mutual information can be written as
\begin{equation}
    \label{eq:mutual_info_app}
    \begin{split}
        I(\mathbf p: T(t)\mathbf p) =& \sum_i\big(T(t)\mathbf p\big)_i\log\big(T(t)\mathbf p\big)_i\\ 
        &+ \sum_{ij}T_{ji}(t) p_i\log T_{ji}(t).
    \end{split}
\end{equation}
Assume that $S(t)$ is HPD, so that $T_{ji}(t) = \sum_k T_{jk}(s) {(T_H)}_{ki}(t,s)$;
the second term of Eq.~\eqref{eq:mutual_info_app} can therefore be written as
\begin{equation}
    \begin{split}
        \sum_{ijk}&T_{jk}(s) {(T_H)}_{ki}(t,s)p_i\left[\log T_{jk}(s) +\underbrace{ \log{(T_H)}_{ki}(t,s)}_{\le0}\right]\\
        &\le \sum_{ijk}T_{jk}(s) {(T_H)}_{ki}(t,s)p_i\log T_{jk}(s) \\
        &=\sum_{jk}T_{jk}(s)\left(T_H(t,s)\mathbf p\right)_k \log T_{jk}(s).
    \end{split}
\end{equation}
Substituting in the definition of the capacity one has
\begin{equation}
    \begin{split}
        C\big(T(t)\big) \le & \max_\mathbf p \sum_i\big(T(s)T_H(t,s)\mathbf p\big)_i\log\big(T(s)T_H(t,s)\mathbf p\big)_i\\
        &+\max_{\mathbf p}\sum_{jk}T_{jk}(s)\left(T_H(t,s)\mathbf p\right)_k \log T_{jk}(s)\\
        =&\max_{\mathbf p^\prime = T_H(t,s)\mathbf p} \sum_i\big(T(s)\mathbf p^\prime\big)_i\log\big(T(s)\mathbf p^\prime\big)_i\\
        &+\max_{\mathbf p^\prime = T_H(t,s)\mathbf p} \sum_{jk}T_{jk}(s)p^\prime_k \log T_{jk}(s)\\
        \le&\max_\mathbf p\sum_i\big(T(s)\mathbf p\big)_i\log\big(T(s)\mathbf p\big)_i\\
        &+\max_{\mathbf p} \sum_{jk}T_{jk}(s)p_k \log T_{jk}(s)\\
        =& C\big(T(s)\big).
    \end{split}
\end{equation}

\bibliography{biblio}

\end{document}